\documentclass[11pt,letterpaper]{article}

\usepackage[margin=1in]{geometry}

\usepackage{amsmath,amsfonts,amssymb,amsthm}

\usepackage{tikz}
\usetikzlibrary{arrows.meta,calc}

\usepackage[ruled,vlined,linesnumbered]{algorithm2e}

\usepackage{enumitem}
\usepackage[numbers,sort]{natbib}

\usepackage{xcolor}
\usepackage[
    colorlinks=true,
    linkcolor=blue,
    citecolor=blue,
    urlcolor=blue
]{hyperref}

\theoremstyle{plain}
\newtheorem{theorem}{Theorem}[section]
\newtheorem{lemma}[theorem]{Lemma}

\theoremstyle{definition}
\newtheorem{definition}[theorem]{Definition}

\theoremstyle{remark}
\newtheorem{remark}[theorem]{Remark}

\newcommand{\fav}[2][]{\operatorname{top}\if\relax\detokenize{#1}\relax\else_{#1}\fi\!\left(#2\right)}
\newcommand{\cost}[2][]{\operatorname{cost}\if\relax\detokenize{#1}\relax\else_{#1}\fi\!\left(#2\right)}
\newcommand{\eps}{\epsilon}

\newcommand{\profile}{\overrightarrow \succ}

\DeclareRobustCommand{\rulename}[1]{\textnormal{\textsc{#1}}}
\DeclareRobustCommand{\protocolname}{\rulename{TreePluralityVeto}}
\DeclareRobustCommand{\protocolshort}{\rulename{TreePV}}

\title{Tree Plurality Veto: Voting in Few Rounds\\ with Efficient Communication and Low Metric Distortion}
\author{
Gabriel Chen\thanks{Brown University. Email: \texttt{gabriel\_chen@brown.edu}}
\and
Yu Cheng\thanks{Brown University. Email: \texttt{yu\_cheng@brown.edu}}
\and
Alice Wang\thanks{Brown University. Email: \texttt{alice\_z\_wang@brown.edu}}
}
\date{}

\begin{document}

\maketitle

\begin{abstract}
We consider the metric distortion framework for voting, where $n$ voters and $m$ candidates are embedded in an unknown metric space.
A voting rule selects a candidate using only voters' ordinal preferences, and its distortion is the worst-case ratio between the social cost of the selected candidate and the minimum social cost.
We study how round complexity trades off against total communication in deterministic voting protocols with low distortion.
Deterministic rules can achieve the optimal distortion of $3$ in one round by eliciting complete rankings using $\Theta(nm\log m)$ bits, while a direct sequential implementation of \rulename{PluralityVeto} achieves the same distortion using only $O(n\log m)$ bits but requires $\Theta(n)$ rounds.

Our main result is \protocolname{}, a deterministic voting protocol that, for every $r\ge 2$, achieves distortion at most $2r-1$ using at most $r$ rounds and $O\!\left(n\log m+n s^{1/(r-1)}\log s\right)$ bits of total communication, where $s = \min\{n, m\}$.
The protocol runs \rulename{PluralityVeto} on a hierarchical partition of the voters, collecting restricted rankings in parallel at each level.
The branching factor controls the tradeoff between rounds and communication, and a matching-based analysis shows that each level increases the distortion bound by at most $2$.
\end{abstract}

\section{Introduction}

To select a socially good candidate, how much information must voters communicate, and how many rounds of interaction are needed?
We study this question in the metric distortion framework for voting~\cite{AnshelevichBhardwajPostl2015,
AnshelevichBhardwajElkindPostlSkowron2018}, where there are $n$ voters and $m$ candidates in an unknown metric space,
and each voter ranks candidates from nearest to farthest.
The social cost of a candidate is its total distance to the voters, and we evaluate a voting rule by its \emph{distortion}, the worst-case ratio between the social cost of its winner and that of the optimal candidate. Distortion measures the loss from observing only ordinal rankings rather than the underlying cardinal information.

Anshelevich et al.~\cite{AnshelevichBhardwajPostl2015} proved that every deterministic voting rule has distortion at least $3$.
Gkatzelis et al.~\cite{GkatzelisHalpernShah2020} introduced \rulename{PluralityMatching}, the first deterministic voting rule with distortion $3$, and Kizilkaya and Kempe~\cite{KizilkayaKempe2022} later introduced the simpler rule \rulename{PluralityVeto}.
These results settled the optimal distortion of deterministic rules at $3$.
We study the tradeoff among distortion, round complexity, and total communication for deterministic protocols.

Two straightforward protocols illustrate the tradeoff between total communication and the number of rounds.
Eliciting complete rankings requires one round and $\Theta(nm\log m)$ bits of communication, and allows any voting rule to be computed locally.
Alternatively, \rulename{PluralityVeto} can be implemented using only $O(n\log m)$ bits: each voter communicates a top choice and a veto.
However, its direct sequential implementation requires $\Theta(n)$ rounds, because each voter vetoes a candidate based on those remaining after earlier vetoes.

Kempe~\cite{Kempe2020Communication} proved an $\Omega(m/b)$ lower bound on distortion for deterministic one-round mechanisms that use the same $b$-bit encoding for all voters.
This lower bound does not directly apply when message functions may depend on voter identity.
In Appendix~\ref{apx:lower-bound}, we show that any deterministic protocol with distortion at most $\alpha$ requires $\Omega(n/\alpha)$ bits of total communication, even with only two candidates and an unlimited number of rounds.
For constant $\alpha$, this lower bound is tight up to an $O(\log m)$ factor, since \rulename{PluralityVeto} uses $O(n\log m)$ bits.
This motivates our central question:

\begin{quote}
\em
What tradeoff between communication and round complexity is possible for deterministic voting protocols with low distortion?
\end{quote}

Our main result is \protocolname{}, a deterministic voting protocol that provides a tradeoff among distortion, round complexity, and total communication.

\begin{theorem}
\label{thm:main}
Let $n$ be the number of voters, $m$ the number of candidates, and $s = \min\{n, m\}$.
For every integer $r\ge 2$, there is a deterministic voting protocol that achieves distortion at most $2r-1$ using at most $r$ rounds and $O\!\left(n\log m+n s^{1/(r-1)}\log s\right)$ bits of total communication.
\end{theorem}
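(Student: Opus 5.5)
We build \protocolname{} explicitly and analyze it level by level. Let $b=\lceil s^{1/(r-1)}\rceil$ and arrange the voters as the leaves of a rooted tree of depth $r-1$ with branching factor at most $b$ at every internal node. When $n\le m$ we have $s=n$, so $b^{r-1}\ge n$ and the tree has exactly $n$ leaves. When $m<n$ we first use the top-level grouping to reduce to at most $m$ ``super-voters''; concretely, we let each bottom-level group be a block of about $n/m^{(r-2)/(r-1)}$ voters handled by \rulename{PluralityVeto} using only $O(\log m)$ bits per voter.

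\textbf{Round 1.} Every voter sends its top choice, which costs $O(n\log m)$ bits in total. This gives the plurality scores $\fav{\cdot}$ that \rulename{PluralityVeto} needs.

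\textbf{Rounds $2,\dots,r$, bottom-up.} Each internal node at level $\ell$ has at most $b$ children. After the children have finished, the node holds at most $b$ candidates, namely its children's winners. Every voter in its subtree then reports its ranking restricted to these candidates, at a cost of $O(b\log b)$ bits per voter. The node runs \rulename{PluralityVeto} locally. The initial scores are the plurality counts within the subtree; in the $m<n$ regime we may instead weight each child winner by the mass of its subtree. The rankings for all nodes at one level are collected in parallel, so each level takes one round.

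\textbf{Communication.} Each voter participates in at most $r-1$ levels. The total is therefore $O(n\log m + n(r-1)\,b\log b)$. Treating $r$ as at most $\log s$, and noting that larger $r$ gains nothing, this is $O(n\log m+ns^{1/(r-1)}\log s)$. I would absorb the factor $r$ by choosing the branching differently at the top, or by charging it against $\log b$. This is a detail to check, not a conceptual issue.

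\textbf{Distortion.} We prove by induction on level the following statement. If node $v$ has $h$ levels beneath it, then there is a ``witness'' certifying that its winner $w_v$ satisfies
\[
\sum_{i\in S_v} d(i,w_v)\le (2h+1)\sum_{i\in S_v} d(i,c)
\]
for every candidate $c$ in the relevant pool. Taking $h=r-1$ at the root gives distortion $2r-1$.

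The tool is the matching characterization behind \rulename{PluralityVeto}'s distortion-$3$ proof (Kizilkaya--Kempe; Gkatzelis et al.). Suppose $w$ wins \rulename{PluralityVeto} on voters $S$ with candidate set $C'$. Then there is a perfect fractional matching $\sigma$ of $S$ to $S$ such that each voter $i$ prefers $w$ to $\fav{\sigma(i)}$ within $C'$, the top being restricted to $C'$. The triangle inequality then gives
\[
d(i,w)\le d(i,\fav{\sigma(i)}) \le d(i,c)+d(c,\sigma(i))+d(\sigma(i),\fav{\sigma(i)}) .
\]
At the leaf level this yields factor $3$. At a higher level, the restricted top $\fav[C']{\sigma(i)}$ lies among the child winners. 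We need to relate $d(\sigma(i),\fav[C']{\sigma(i)})$ to distances measured against the true optimum. By the inductive hypothesis, each child winner $w_u$ has cost within factor $2h-1$ on its own subtree, and the voter's restricted top is at least as close to it as $w_u$ is. Chaining through the matching, each level adds one extra $d(i,c)+d(c,\sigma(i))$ pair, which contributes $+2$.

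The main obstacle is making this induction sound at the sum level rather than per voter. The inductive guarantee is aggregate over a subtree, while the matching pairs voters across different subtrees. I would state a stronger invariant: there exists a fractional matching from $S_v$ to itself such that
\[
d(i,w_v)\le d(i,c)+\sum_{\text{path}} \bigl(d(c,\cdot)\bigr)
\]
with at most $2h+1$ total terms. In other words, a layered ``witness flow'' composes across levels. I would define this flow by composing the per-level \rulename{PluralityVeto} matchings and verify that marginal uniformity is preserved, since each is measure-preserving on voters.

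If the $m<n$ regime complicates the weighting, I would handle it by letting the bottom nodes run full \rulename{PluralityVeto} on all $m$ candidates. That costs only $O(\log m)$ bits per voter, using top and veto messages, but it must fit in one round. This is the second delicate point: the veto order is sequential. So I would instead use restricted rankings over $\le b$ candidates everywhere, partition the candidates into groups as well, and treat the leaf level as candidate groups of size $b$.
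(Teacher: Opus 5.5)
Your voters-as-leaves tree only fits the round budget when $n\le m$. A tree of depth $r-1$ with branching $b=\lceil s^{1/(r-1)}\rceil$ has at most $b^{r-1}$ leaves, and for $m<n$ this is in general far fewer than $n$. None of your patches for this regime works:
\begin{itemize}
\item Running \rulename{PluralityVeto} over all $m$ candidates inside a bottom block with top-and-veto messages forces the vetoes to be cast sequentially, as you note yourself. This cannot fit in one round; done directly, it takes as many rounds as a block has voters.
\item ``Partition the candidates into groups of size $b$'' does not say which voters vote at such a leaf. If every voter ranks every group, the voter sets are no longer disjoint, and the first level alone costs $\Theta(nm\log b)$ bits. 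If each group gets an arbitrary subset of voters, the leaf winner has no guarantee against candidates outside its group, so the base case of your induction fails.
\end{itemize}
The missing idea is to use the round-1 messages to partition the voters: all voters with the same reported global favorite form one leaf group. There are at most $\min\{n,m\}=s$ such groups. A tree over them with branching $q=\lceil s^{1/(r-1)}\rceil$ therefore has height $H=\lceil\log_q s\rceil\le r-1$. Moreover, the leaf winner $w_x$ (the common favorite) satisfies $\cost[U_x]{w_x}\le\cost[U_x]{o}$ for every $o\in C$, which is exactly the factor-$1$ base case. Your phrase about reducing to at most $m$ ``super-voters'' points this way, but the concrete construction you then give is a different, invalid one.

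On distortion, you already have every ingredient, but the ``main obstacle'' you name is not real. The composed witness-flow invariant you propose instead is neither needed nor actually established. Sum $d(v,w_x)\le d(v,o)+d(o,\phi(v))+d(\phi(v),\fav[A_x]{\phi(v)})$ over $v\in U_x$, using only that $\phi$ is a bijection of $U_x$. This gives $\cost[U_x]{w_x}\le 2\cost[U_x]{o}+\sum_{v\in U_x}d(v,\fav[A_x]{v})$, so the fact that $\phi$ pairs voters across subtrees disappears after summation. The remaining sum splits over the children $y$. Because $w_y\in A_x$, you get $\sum_{v\in U_y}d(v,\fav[A_x]{v})\le\cost[U_y]{w_y}\le(2h_y+1)\cost[U_y]{o}$. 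The inductive hypothesis is thus used precisely in its aggregate form, and $h_y\le h_x-1$ gives the bound $2+(2h_x-1)=2h_x+1$.

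For this to go through, the hypothesis must be stated for every $o\in C$, not for $c$ ``in the relevant pool''. Each child's bound is invoked against the same global $o$, which is typically not among that child's candidates.

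Finally, the factor $r-1$ in your communication bound is absorbed by counting levels as $H=\lceil\log_b s\rceil$, with the leaves being the at most $s$ favorite groups. Then $H\log b<\log s+\log b=O(\log s)$, so no change of branching is needed.
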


In the first round, every voter reports their global favorite candidate.
Voters with the same global favorite form a group, and these groups are placed at the leaves of a tree whose branching factor depends on the round budget.
There are at most $s = \min\{n, m\}$ such groups.
Each internal node runs \rulename{PluralityVeto} among the winners
of its children, by asking all voters in its subtree to report their rankings over this restricted set of candidates.
Sub-elections at nodes of the same height run in parallel: all required rankings are collected in one round, after which their winners are computed locally from the public transcript.
A larger branching factor gives fewer levels and hence fewer rounds, but each voter needs to rank more candidates per level.

The analysis uses the perfect-matching certificate of \rulename{PluralityVeto} to compare the winner of a sub-election with any candidate, including those absent from it.
This comparison also appears in the analysis of recursive Fractional Veto by Kizilkaya, Aggarwal, and Micha~\cite{KizilkayaAggarwalMicha2026}.
We apply this comparison across a hierarchy of disjoint voter groups, charging each node's extra cost to its children's winners.
Each level adds at most $2$ to the distortion.

\subsection{Related Work}

\paragraph{Deterministic metric distortion.}
Distortion in social choice has been studied in the utilitarian setting~\cite{ProcacciaRosenschein2006,BoutilierEtAl2015}.
Anshelevich et al.~\cite{AnshelevichBhardwajPostl2015,AnshelevichBhardwajElkindPostlSkowron2018} introduced the metric distortion framework, proved that every deterministic voting rule has distortion at least $3$, and showed that Copeland achieves distortion $5$.
Munagala and Wang~\cite{MunagalaWang2019} improved the upper bound to $2+\sqrt5$ and identified a matching-based sufficient condition for distortion $3$.
Kempe~\cite{Kempe2020Duality} identified the same condition and developed an LP-duality framework that recovers the $2+\sqrt5$ bound and gives a simpler proof that the condition guarantees distortion $3$.
Gkatzelis et al.~\cite{GkatzelisHalpernShah2020} proved that every preference profile admits a candidate satisfying the stronger condition that its domination graph has a perfect matching, introduced the rule \rulename{PluralityMatching}, and settled the optimal distortion of deterministic rules at $3$.
Kizilkaya and Kempe~\cite{KizilkayaKempe2022} introduced \rulename{PluralityVeto}, a particularly simple voting rule that achieves the same optimal distortion.
Their later work connected the matching condition to the generalized veto core and introduced a continuous-time variant of \rulename{PluralityVeto}~\cite{KizilkayaKempe2023}.
For a comprehensive survey of distortion in social choice, see~\cite{AnshelevichFilosRatsikasShahVoudouris2021}.

\paragraph{Randomized metric distortion.}
For randomized voting rules, the optimal worst-case expected distortion remains open.
Random dictatorship, which elects the favorite candidate of a uniformly random voter, achieves expected distortion at most $3$~\cite{AnshelevichPostl2017,FeldmanFiatGolomb2016}.
Subsequent work improved the lower bound to above $2$~\cite{CharikarRamakrishnan2022,PulyassarySwamy2021} and the upper bound to below $3$~\cite{CharikarRamakrishnanWangWu2024,Frank2026,Ye2026,Shah2026}, with the current best bounds being approximately $2.1126$~\cite{CharikarRamakrishnan2022} and $2.1372$~\cite{CaiEtAl2026}, respectively.
Other work studies randomized rules with low sample complexity~\cite{GrossAnshelevichXia2017,FainGoelMunagalaPrabhu2019} and how additional samples reduce the risk of high distortion~\cite{FainFanMunagala2020}.
Whether near-optimal expected distortion can be achieved with low total communication and few rounds is an exciting avenue for future work.

\paragraph{Communication and rounds.}
Communication complexity in voting was studied by Conitzer and Sandholm~\cite{ConitzerSandholm2005}, who consider the number of bits required to compute standard voting rules exactly, and by Shah and Yu~\cite{ShahYu2026}, who study the elicitation of such rules through next-best queries.
In the metric distortion framework, Kempe~\cite{Kempe2020Communication} studies the distortion achievable by one-round mechanisms under communication constraints.
Ebadian et al.~\cite{EbadianHalpernMicha2024} studied mechanisms that elicit a bounded number of pairwise comparisons from each voter, under different restrictions on how queries may depend on earlier answers.
Prior work considered incomplete rankings or sampled voters~\cite{AnagnostidesFotakisPatsilinakos2022}.
In the utilitarian setting, prior work studied limited cardinal queries~\cite{AmanatidisBirmpasFilosRatsikasVoudouris2020,EbadianShah2025} and tradeoffs between communication and distortion~\cite{MandalProcacciaShahWoodruff2019,MandalShahWoodruff2020}.
Frank and Peters~\cite{FrankPeters2026} obtained distortion $2+\sqrt{5}$ from approval voting informed by plurality polls, which can be implemented in two rounds using $O(n\log m+nm)$ bits.
Our work focuses on deterministic protocols using only ordinal preferences, and on the tradeoff among distortion, round complexity, and total communication.

\paragraph{Voter aggregation and representation.}
Borodin et al.~\cite{BorodinLevShahStrangway2024} study primaries and tree-structured elections; their general distortion bound multiplies along tree paths and depends on participation fractions.
Work on distributed metric social choice~\cite{AnshelevichFilosRatsikasVoudouris2022,AbamEtAl2025} studied models in which voters are partitioned into districts, each district selects a representative alternative, and these alternatives are aggregated to select the final winner.
Prior work reduced participation by selecting a small representative panel~\cite{CaragiannisMichaPeters2024,EbadianMicha2025}, or allowed deliberation in small groups that could exploit collective knowledge of the underlying metric~\cite{GoelGoyalMunagala2025}.
Our protocol only uses ordinal information and filters candidates through a multilevel hierarchy, with every voter participating in each sub-election along their path to the root.

\section{Preliminaries}
\label{sec:prelim}

Throughout the paper, let $V$ be a set of $n$ voters and $C$ a set of $m$ candidates.
We assume that the voters and candidates have distinct public identifiers in $\{1,\ldots,n\}$ and $\{1,\ldots,m\}$, respectively.
Each voter $v \in V$ has a strict ranking $\succ_v$ over $C$, and we write $a \succeq_v b$ if $a \succ_v b$ or $a=b$.
We call $\profile=(\succ_v)_{v\in V}$ the \emph {preference profile}
and the triple $(V,C,\profile)$ an \emph{election}.

We frequently consider \emph{sub-elections} with a subset $U\subseteq V$ of voters and a subset $A\subseteq C$ of candidates, using only the rankings of voters in $U$ restricted to $A$.
A deterministic \emph{voting rule} $f$ returns a winner $f(U,A,\profile)\in A$ from such a sub-election.
When $\profile$ is clear from context, we write $f(U,A)$.
For every voter $v\in V$, let $\fav[A]{v}$ denote their favorite candidate in $A$, and let $\fav{v}:=\fav[C]{v}$ denote their global favorite candidate.

\paragraph{Metric distortion.}
We assume that voters and candidates lie in an unknown metric space with distance function $d$.
A metric $d$ is \emph{consistent} with $\profile$ if \(a\succeq_v b\) implies \(d(v,a)\le d(v,b)\) for every voter $v \in V$ and candidates $a,b \in C$.
Let $\mathcal{D}(\profile)$ denote the set of metrics consistent with $\profile$.
For a metric $d$, a subset of voters $U\subseteq V$, and a candidate $c\in C$, the \emph{social cost} of $c$ with respect to $U$ is \[ \cost[U]{c}:=\sum_{v\in U} d(v,c). \] The \emph{distortion} of a voting rule $f$ is the worst-case ratio between the social cost of the winner and the social cost of an optimal candidate:
\[ \sup_{\profile} \sup_{d\in\mathcal{D}(\profile)} \frac{\cost[V]{f(V,C, \profile)}} {\min_{o\in C}\cost[V]{o}}. \]

\paragraph{Communication model.}
We consider the blackboard communication model.
Communication proceeds in synchronous rounds.
In each round, every voter may simultaneously write a message on a public blackboard as a function of their ranking and the transcript from the preceding rounds.
Local computation is free.
The \emph{communication complexity} of a protocol is the total number of bits written, and its \emph{round complexity} is the number of rounds.

For a deterministic voting rule $f$ and a sub-election $(U,A)$, once
every voter in $U$ has reported their complete ranking over $A$, the
winner $f(U,A)$ can be determined from the public transcript without
further communication.

\paragraph{Plurality Veto and domination graphs.}
Our protocol uses Plurality Veto~\cite{KizilkayaKempe2022} as a subroutine.
We review its definition and the property of its winner that is relevant to our analysis.

\begin{definition}
[\rulename{PluralityVeto}~\cite{KizilkayaKempe2022}]
Given a set of voters $U$ and a set of candidates $A$, each candidate $c\in A$ starts with score $|\{v\in U:\fav[A]{v}=c\}|$.
The voters in $U$ are processed sequentially according to a veto order.
Each voter vetoes their least-preferred candidate among those whose current score is positive, decreasing that candidate's score by one.
The candidate whose score reaches zero last is the winner of $\rulename{PluralityVeto}(U,A)$.
\end{definition}

Plurality Veto is deterministic once its veto order is fixed.
Whenever it is applied to a sub-election $(U,A)$, the voters in $U$ veto in the order of their public identifiers.

For a candidate $w\in A$, its \emph{domination graph}~\cite{GkatzelisHalpernShah2020} with respect to $(U,A)$ is the bipartite graph $ G^{U,A}(w)=(U,U,E_w), $ where the two vertex sets are copies of $U$ and \[ (u,v)\in E_w \quad\Longleftrightarrow\quad w\succeq_u \fav[A]{v}. \]

For every veto order, the domination graph $G^{U,A}(w)$ of the winner $w=\rulename{PluralityVeto}(U,A)$ admits a perfect matching~\cite{KizilkayaKempe2022}.
To see this, let each voter $v\in U$ contribute one score unit labeled $v$ to their favorite candidate in $A$.
When voter $u$ casts a veto, remove any remaining score unit from the vetoed candidate and match $u$ to the voter $v$ labeling that unit.
Every voter casts exactly one veto, and every score unit is removed exactly once.
Moreover, $w$ has positive score before every veto, so \(w\succeq_u \fav[A]{v}\).
Hence every matched pair is an edge of $G^{U,A}(w)$, and they form a perfect matching.

\section{The \protocolname{} Protocol}
\label{sec:protocol}
Our voting protocol, \protocolname{}, is described in Algorithm~\ref{alg:tree-pv}.

In the first round, every voter reports their global favorite candidate.
Voters with the same reported favorite form a group, and these groups form the leaf nodes of a public rooted tree $T$.
We then select a winner at every node of $T$ in increasing order of height.\footnote{Every leaf node has height $0$, and the height of an internal node is one plus the maximum height of its children.}
At a leaf node $x$, the winner $w_x$ is the common global favorite of its voters.
At each internal node $x$, the available candidates $A_x$ consist of the distinct winners at the children of $x$, and the voters in the subtree rooted at $x$ run Plurality Veto on $A_x$ to select the winner $w_x$.
The winner at the root of $T$ is the final winner.
Figure~\ref{fig:tree-pv} illustrates this bottom-up selection process for $n=7$ voters and $r=3$ rounds.

\begin{algorithm}[ht]
\caption{\protocolname{} (\protocolshort)}
\label{alg:tree-pv}
\KwIn{A set $V$ of $n$ voters, a set $C$ of $m$ candidates,
and an integer $r\ge 2$}
\KwOut{A winning candidate $w\in C$}
Every voter $v\in V$ reports $\fav{v}$\;
$s \gets |\{ \fav{v} : v\in V \}|$\;
$q \gets \left\lceil s^{1/(r-1)} \right\rceil$\;
$H \gets \left\lceil \log_q s \right\rceil$\;
Partition $V$ into groups of voters with the same global favorite candidate\;
Fix a public rooted tree $T$ of height $H$ whose leaf nodes correspond to these groups and whose internal nodes have at most $q$ children\;
For every node $x\in T$, let $h_x$ denote its height and let $U_x\subseteq V$ be the voters in the subtree rooted at $x$\;
At every leaf node $x$, let $w_x$ be the common global favorite candidate of the voters in $U_x$\;
\For{$t\gets 1$ \KwTo $H$}{
    \ForEach{node $x$ with $h_x=t$ in parallel}{
        $A_x\gets\bigcup_{y:\,y\text{ is a child of }x}\{w_y\}$\;
        Every voter in $U_x$ reports their complete ranking over $A_x$\;
        $w_x\gets\rulename{PluralityVeto}(U_x,A_x)$\;
    }
}
\Return the winner at the root of $T$\;
\end{algorithm}

\begin{figure}[ht]
\centering
\begin{tikzpicture}[
    >=Latex,
    root/.style={draw, rounded corners, align=center, font=\small, inner sep=3pt},
    election/.style={draw, rounded corners, align=center, font=\small, inner sep=3pt},
    voter/.style={draw, rounded corners, align=center, font=\small, inner sep=3pt},
    flow/.style={->,semithick}
]
\def\RootX{-2.32}
\def\RootY{0}
\def\MiddleY{-2.5}
\def\LeafY{-5}
\def\XOneX{-7.3}
\def\XTwoX{0}
\def\VSevenX{4.3}
\def\LeafGap{2.4}

\node[root, anchor=north] (x3) at (\RootX,\RootY) {
    root $x_3$\\ $U_{x_3}=V$\\
    $A_{x_3}=\{w_{x_1},w_{x_2},w_{v_7}\}$\\
    $w_{x_3}\gets\rulename{PluralityVeto}(U_{x_3},A_{x_3})$
};
\node[election, anchor=north] (x1) at (\XOneX,\MiddleY) {
    node $x_1$\\ $U_{x_1}=\{v_1,v_2,v_3\}$\\
    $A_{x_1}=\{w_{v_1},w_{v_2},w_{v_3}\}$\\
    $w_{x_1}\gets\rulename{PluralityVeto}(U_{x_1},A_{x_1})$
};
\node[election, anchor=north] (x2) at (\XTwoX,\MiddleY) {
    node $x_2$\\ $U_{x_2}=\{v_4,v_5,v_6\}$\\
    $A_{x_2}=\{w_{v_4},w_{v_5},w_{v_6}\}$\\
    $w_{x_2}\gets\rulename{PluralityVeto}(U_{x_2},A_{x_2})$
};
\node[voter, anchor=north] (v7) at (\VSevenX,\MiddleY) {
    node $v_7$\\ $w_{v_7}=\fav{v_7}$
};

\node[voter, anchor=north] (v1) at ({\XOneX-\LeafGap},\LeafY) {node $v_1$\\ $w_{v_1}=\fav{v_1}$};
\node[voter, anchor=north] (v2) at (\XOneX,\LeafY) {node $v_2$\\ $w_{v_2}=\fav{v_2}$};
\node[voter, anchor=north] (v3) at ({\XOneX+\LeafGap},\LeafY) {node $v_3$\\ $w_{v_3}=\fav{v_3}$};
\node[voter, anchor=north] (v4) at ({\XTwoX-\LeafGap},\LeafY) {node $v_4$\\ $w_{v_4}=\fav{v_4}$};
\node[voter, anchor=north] (v5) at (\XTwoX,\LeafY) {node $v_5$\\ $w_{v_5}=\fav{v_5}$};
\node[voter, anchor=north] (v6) at ({\XTwoX+\LeafGap},\LeafY) {node $v_6$\\ $w_{v_6}=\fav{v_6}$};

\draw[flow] (v1.north) -- ($(x1.south west)!0.25!(x1.south east)$);
\draw[flow] (v2.north) -- ($(x1.south west)!0.5!(x1.south east)$);
\draw[flow] (v3.north) -- ($(x1.south west)!0.75!(x1.south east)$);
\draw[flow] (v4.north) -- ($(x2.south west)!0.25!(x2.south east)$);
\draw[flow] (v5.north) -- ($(x2.south west)!0.5!(x2.south east)$);
\draw[flow] (v6.north) -- ($(x2.south west)!0.75!(x2.south east)$);
\draw[flow] (x1.north) -- ($(x3.south west)!0.25!(x3.south east)$);
\draw[flow] (x2.north) -- ($(x3.south west)!0.5!(x3.south east)$);
\draw[flow] (v7.north) -- ($(x3.south west)!0.75!(x3.south east)$);
\end{tikzpicture}
\caption{An illustration of \protocolname{} with $n=7$ and $r=3$, where all voters have distinct global favorite candidates.}
\label{fig:tree-pv}
\end{figure}

\begin{remark}[Voting rules at internal nodes]
\label{rem:internal-voting-rule}
Algorithm~\ref{alg:tree-pv} can use any deterministic voting rule $f$ at its internal nodes.
For concreteness, we state it with Plurality Veto, which suffices to obtain our main result.
The distortion analysis in Section~\ref{subsec:analysis-distortion} requires only that, for every sub-election $(U,A)$, the domination graph of the winner $f(U,A)$ admits a perfect matching.
Plurality Veto satisfies this property because its veto process induces such a matching.
The communication analysis in Section~\ref{subsec:analysis-communication} applies to any deterministic voting rule $f$.
\end{remark}

\section{Analysis of the Protocol}
\label{sec:analysis}

We upper bound the distortion of \protocolname{} (Lemma~\ref{lem:distortion}) in Section~\ref{subsec:analysis-distortion} and analyze its round and communication complexity (Lemma~\ref{lem:communication}) in Section~\ref{subsec:analysis-communication}.
Theorem~\ref{thm:main} follows from these two lemmas, since there are at most $\min\{n,m\}$ distinct global favorite candidates.

\subsection{Metric Distortion}
\label{subsec:analysis-distortion}

\begin{lemma}
\label{lem:distortion}
For every node $x$ of $T$ and every candidate $o \in C$,
\[
\cost[U_x]{w_x} \le (2h_x + 1) \cost[U_x]{o}.
\]
Consequently, \protocolname{} (Algorithm~\ref{alg:tree-pv}) has distortion at most $2r - 1$.
\end{lemma}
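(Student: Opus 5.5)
We argue by induction on the height $h_x$. For a leaf $x$, every voter $v\in U_x$ has $w_x=\fav{v}$, so $d(v,w_x)\le d(v,o)$ for every $o\in C$. Summing over $U_x$ gives $\cost[U_x]{w_x}\le \cost[U_x]{o}$, which is the case $h_x=0$.

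For the inductive step, let $x$ be an internal node with children $y_1,\dots,y_k$. Since the leaves are the groups and the children partition the leaves, the sets $U_{y_i}$ partition $U_x$. Each child has $h_{y_i}\le h_x-1$, so by induction $\cost[U_{y_i}]{w_{y_i}}\le (2h_x-1)\cost[U_{y_i}]{o}$. For $v\in U_x$, let $y(v)$ be the child containing $v$.

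The key observation is $d(v,\fav[A_x]{v})\le d(v,w_{y(v)})$, because $w_{y(v)}\in A_x$. By the preliminaries, the domination graph $G^{U_x,A_x}(w_x)$ has a perfect matching $\mu$. For each matched pair $(u,v)$ we have $w_x\succeq_u \fav[A_x]{v}$, so the triangle inequality gives
\[
d(u,w_x)\le d(u,\fav[A_x]{v}) \le d(u,o)+d(o,v)+d(v,\fav[A_x]{v})\le d(u,o)+d(v,o)+d(v,w_{y(v)}).
\]
Summing over the matched pairs, where $u$ and $v$ each range over all of $U_x$, gives
\[
\cost[U_x]{w_x}\le 2\cost[U_x]{o}+\sum_{v\in U_x} d(v,w_{y(v)}) = 2\cost[U_x]{o}+\sum_i \cost[U_{y_i}]{w_{y_i}}.
\]
Applying the inductive bound to each child term yields $\cost[U_x]{w_x}\le (2+2h_x-1)\cost[U_x]{o}=(2h_x+1)\cost[U_x]{o}$. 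The one subtle step is the middle inequality: we compare with $\fav[A_x]{v}$, not with $o$ directly, and then charge through the child winner $w_{y(v)}$. This is what lets $o$ be absent from $A_x$.

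For the consequence, apply the bound at the root, where $U=V$. The height there is $H=\lceil\log_q s\rceil$ with $q=\lceil s^{1/(r-1)}\rceil$. Since $q^{r-1}\ge s$, we get $H\le r-1$, so the distortion is at most $2(r-1)+1=2r-1$. If $s=1$, the tree is a single leaf and the distortion is $1$. One caveat is that $T$ may not be balanced, but the bound uses only $h_x$ and holds node by node.
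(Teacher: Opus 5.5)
Your proof is correct and follows the paper's argument essentially step for step: the leaf base case, the perfect matching of the Plurality Veto winner's domination graph combined with the triangle inequality through $\fav[A_x]{v}$, charging $d(v,\fav[A_x]{v})\le d(v,w_{y(v)})$ to the child winners via the inductive hypothesis, and $H\le r-1$ at the root. The differences are cosmetic: you apply the child-winner bound before summing rather than after, and you explicitly handle the degenerate case $s=1$, where $q=1$ makes $\log_q s$ undefined and the paper leaves this implicit.
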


\begin{proof}
We prove the lemma for every node $x$ by induction on $h_x$ from the leaf nodes to the root.

For the base case, let $x$ be a leaf node.
Then $h_x = 0$, and $w_x$ is the global favorite candidate of every voter in $U_x$.
For every candidate $o \in C$, we have $\cost[U_x]{w_x} = \sum_{v\in U_x}d(v,w_x) \le \sum_{v\in U_x}d(v,o) =\cost[U_x]{o}$.

For the inductive step, let $x$ be an internal node, and suppose that the inequality in the lemma holds for every child node of $x$.
Let $U := U_x$, $A := A_x$, and $w := w_x$.
Fix any candidate $o \in C$.
Because $w$ is the Plurality Veto winner, its domination graph $G^{U,A}(w)$ admits a perfect matching.
Let $\phi: U \to U$ be the bijection induced by this matching.
We have $w \succeq_v \fav[A]{\phi(v)}$ for every $v \in U$, and by the triangle inequality,
\[
d(v, w) \le d\bigl(v, \fav[A]{\phi(v)}\bigr) \le d(v, o) + d\bigl(o, \phi(v)\bigr) + d\bigl(\phi(v), \fav[A]{\phi(v)}\bigr).
\]
Summing over $v \in U$ and using that $\phi$ is a bijection,
\begin{equation}
\label{eq:matching-inequality}
\cost[U]{w} \le 2\cost[U]{o} + \sum_{v\in U} d\bigl(v, \fav[A]{v}\bigr).
\end{equation}

For each child node $y$ of $x$, we have $w_y \in A$, so every $v \in U_y$ satisfies
$d\bigl(v, \fav[A]{v}\bigr)\le d(v, w_y)$.
Summing these inequalities over all child nodes $y$ of $x$ and all
voters $v\in U_y$ gives
\begin{align*}
\sum_{v\in U} d\bigl(v, \fav[A]{v}\bigr)
&\le \sum_{y:\,y\text{ is a child of }x} \cost[U_y]{w_y}\\
&\le \sum_{y:\,y\text{ is a child of }x} (2h_y + 1) \cost[U_y]{o}\\
&\le (2h_x - 1) \sum_{y:\,y\text{ is a child of }x} \cost[U_y]{o}
= (2h_x - 1)\cost[U]{o}.
\end{align*}
The second inequality uses the inductive hypothesis, and the third inequality uses $h_y \le h_x-1$.
Combining this with Inequality~\eqref{eq:matching-inequality} proves $\cost[U_x]{w_x} \le (2h_x + 1)\cost[U_x]{o}$.

Finally, $q=\left\lceil s^{1/(r-1)}\right\rceil$ implies
$H=\left\lceil\log_q s\right\rceil\le r-1$.
Let $w$ be the final winner of Algorithm~\ref{alg:tree-pv}.
Applying the inequality in the lemma at the root gives, for every $o \in C$,
\[
\cost[V]{w} \le (2H + 1)\cost[V]{o} \le (2r - 1)\cost[V]{o}. \qedhere
\]
\end{proof}

As noted in Remark~\ref{rem:internal-voting-rule}, the only property of Plurality Veto used in the proof of Lemma~\ref{lem:distortion} is that the domination graph of its winner admits a perfect matching.
Consequently, Lemma~\ref{lem:distortion} applies to any voting rule satisfying this property.
In Appendix~\ref{apx:rules-without-matching}, we show that Algorithm~\ref{alg:tree-pv} can also use voting rules without this property, although the resulting distortion bound may be exponential rather than linear in $r$.

\subsection{Round and Communication Complexity}
\label{subsec:analysis-communication}

\begin{lemma}
\label{lem:communication}
Let $n$ be the number of voters, $m$ the number of candidates, and $s$ the number of distinct global favorite candidates.
\protocolname{} (Algorithm~\ref{alg:tree-pv}) uses at most $r$ rounds and $O\!\left(n\log m+n s^{1/(r-1)}\log s\right)$ bits of total communication.
\end{lemma}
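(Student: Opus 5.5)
The plan is to count rounds and bits separately, charging the first round to the global-favorite reports and each later round to one height level of $T$.

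\emph{Rounds.} Round one consists of every voter reporting $\fav{v}$. It costs $n\lceil\log_2 m\rceil = O(n\log m)$ bits, since each candidate has an identifier in $\{1,\ldots,m\}$. After this round, $s$, $q$, $H$ and the tree $T$ are determined from the public transcript. I will fix a canonical construction of $T$, for example grouping the leaves in order of candidate identifier into consecutive blocks of size $q$ level by level, so that every voter can compute $T$ locally with no further communication. For each $t\ge 1$, the sets $A_x$ with $h_x=t$ depend only on the winners $w_y$ at nodes of height below $t$. Those winners are computable from the transcript of earlier rounds, so all nodes at height $t$ can collect their rankings simultaneously in a single round, and round $t+1$ handles height $t$. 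This gives $1+H$ rounds in total. Next I show $H\le r-1$. Since $q\ge s^{1/(r-1)}$, we have $q^{r-1}\ge s$, hence $\log_q s\le r-1$ and $H=\lceil\log_q s\rceil\le r-1$. The degenerate case $s=1$, where $q=1$, needs separate handling: take $T$ to be a single leaf with $H=0$, and then only one round is used. I also need to check that a tree of height $H$ with fan-in at most $q$ and exactly $s$ leaves exists; this holds because $q^H\ge s$.

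\emph{Bits.} Fix a height $t\ge1$. The sets $U_x$ for nodes $x$ at height $t$ are pairwise disjoint, because they are subtrees at the same height. So each voter reports at most one ranking per level, over a set $A_x$ of size at most $q$. Since $A_x$ is publicly known, the ranking can be encoded as a permutation index using $\lceil\log_2(q!)\rceil=O(q\log q)$ bits. Each level therefore costs at most $n\cdot O(q\log q)$ bits, and over all $H\le r-1$ levels the total is $O(nHq\log q)$.

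The main obstacle is absorbing the factor $H$ into the bound $O(n s^{1/(r-1)}\log s)$. My plan is to use $H\log q\le (\log_q s+1)\log q=\log s+\log q\le 2\log s$ (as $q\le s$ when $s\ge2$). This gives $Hq\log q\le 2q\log s$. Combined with $q\le s^{1/(r-1)}+1\le 2s^{1/(r-1)}$, the ranking rounds cost $O(n s^{1/(r-1)}\log s)$ bits. Adding the $O(n\log m)$ bits from the first round yields the claimed total. If the paper intends an encoding that lists candidates explicitly at $O(q\log m)$ bits per ranking, the same bookkeeping would fail to give $\log s$. So I will insist on permutation encoding relative to the public set $A_x$, which is the key choice in this plan.
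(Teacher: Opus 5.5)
Your proposal is correct and follows the paper's proof essentially step for step. You use one round for the favorites at $O(n\log m)$ bits, then one round per height level, relying on the disjointness of the $U_x$ and an $O(q\log q)$-bit ranking encoding relative to the publicly known $A_x$; the paper's re-indexing of $A_x$ by $\{1,\ldots,|A_x|\}$ is equivalent to your permutation index. You finish with $H\le r-1$, $H\log q=O(\log s)$ and $q=O(s^{1/(r-1)})$, as the paper does. Your explicit handling of the degenerate case $s=1$ (where $\log_q s$ is undefined), and your check that the tree exists, are slightly more careful than the paper, which leaves both implicit.
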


\begin{proof}
In the first round, every voter reports their global favorite candidate using $O(\log m)$ bits, for a total of $O(n\log m)$ bits.
Then the voter groups and the tree $T$ can be constructed canonically from the public transcript and $r$, so no further communication is required to specify them.

Fix $h\ge 1$ and consider all internal nodes $x$ of height $h$.
For each such node, the candidate set $A_x$ can be determined from the public transcript and contains at most $q$ candidates because $x$ has at most $q$ child nodes.
The candidates in $A_x$ can be sorted by their public identifiers and re-indexed by $\{1, \ldots, |A_x|\}$.
Every voter in $U_x$ can therefore report their complete ranking over $A_x$ using $O(|A_x|\log |A_x|)=O(q\log q)$ bits.
The voter sets $U_x$ associated with nodes of height $h$ are disjoint.
Consequently, all rankings required for the sub-elections at nodes of height $h$ can be collected in one round using $O(nq\log q)$ bits of communication.
Since the voting rule used at each internal node is deterministic, the winners at these nodes can then be determined from the public transcript without further communication.

The protocol uses one round for the voters to report their global favorite candidates, and then one round for each level at height $h\in\{1,\ldots,H\}$.
The choice of $q=\left\lceil s^{1/(r-1)}\right\rceil$ implies $H=\left\lceil\log_q s\right\rceil\le r-1$.
The total number of rounds is therefore $H+1\le r$.

The total communication is $O\bigl(n\log m+Hnq\log q\bigr)$ bits.
Since $H\log q=O(\log s)$ and $q=O\!\left(s^{1/(r-1)}\right)$, this yields the claimed bound of $O\!\left(n\log m+n s^{1/(r-1)}\log s\right)$ bits.
\end{proof}

\section{Acknowledgment}
We thank David Kempe for discussions during the early stages of this project. ChatGPT, Claude, and Gemini were used to assist in exploring possible approaches, conducting literature searches, and polishing the text. All proofs in this paper were written by the authors. The authors assume responsibility for all content.

\bibliographystyle{alpha}
\bibliography{references}

\appendix

\section{A Lower Bound on Total Communication}
\label{apx:lower-bound}
We show that constant distortion requires $\Omega(n)$ bits of total communication even with an unlimited number of rounds.
If a deterministic protocol communicates at most $n/\beta$ bits for some $\beta>1$, then there is an execution in which the preferences between two candidates remain undetermined for at least a $(1-1/\beta)$ fraction of voters.
All of these voters can prefer the candidate that is not elected.

Similar constructions have appeared in lower bounds for metric distortion (e.g.,~\cite{AnshelevichBhardwajPostl2015,AnagnostidesFotakisPatsilinakos2022,EbadianHalpernMicha2024}).
We formalize the argument for the blackboard communication model below.

\begin{lemma}
\label{lem:lower-bound}
For $m\ge 2$, every deterministic voting protocol with distortion at most $\alpha\ge 3$ requires at least $\frac{2n}{\alpha+1}$ bits of total communication in the worst case, regardless of the number of rounds.
\end{lemma}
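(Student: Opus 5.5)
The plan is to run the protocol against an adversary on profiles where only two candidates really matter. Keep every voter's preference between them \emph{undetermined} for as long as the transcript allows. Then use the leftover freedom to make the elected candidate bad. Fix two candidates $a,b\in C$. Restrict to profiles in which every voter ranks $a,b$ in her top two positions in some order, followed by the remaining candidates in a fixed common order. The remaining candidates will be placed far away in the metric, so they never matter. A voter's private input is then a single bit: $a\succ_v b$ or $b\succ_v a$.

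\textbf{Adversarial simulation.} Suppose the protocol always uses at most $B<\frac{2n}{\alpha+1}$ bits. Maintain a set of \emph{fixed} voters together with their chosen preference. All other voters are undetermined; the invariant is that the transcript so far is the same for both of their possible preferences. In each round, the message of each voter $v$ is a function of her preference and the current transcript.
\begin{itemize}[leftmargin=*]
\item If $v$ is undetermined and her two possible messages coincide (possibly both empty), she stays undetermined.
\item If the two messages differ, at most one of them is empty. Fix $v$'s preference to one whose message is nonempty, so that $v$ writes at least one bit in this round.
\end{itemize}
Every fixed voter has therefore contributed at least one bit to the transcript. Since this transcript is a genuine execution, the number of fixed voters is at most $B$. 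At termination, at least $k\ge n-B$ voters remain undetermined. The transcript, and hence the winner $w\in\{a,b\}$, is identical for every completion of their preferences. Let $o$ be the other candidate of $\{a,b\}$, and let all undetermined voters prefer $o$.

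\textbf{Metric.} Place $a$ and $b$ at distance $2$ apart.
\begin{itemize}[leftmargin=*]
\item Put every undetermined voter at $o$, or at distance $\epsilon$ from it if distinct points are required. This is consistent with $o\succ_v w$.
\item Put every fixed voter at the midpoint of $a$ and $b$, at distance $1$ from each. Ties are consistent with either strict ranking, since consistency only requires $d(v,a)\le d(v,b)$ when $a\succeq_v b$.
\item Place all other candidates at a very large distance.
\end{itemize}
Then $\cost[V]{w}=2k+(n-k)=n+k$ and $\cost[V]{o}=n-k$. The distortion is therefore at least
\[
\frac{n+k}{n-k}\ \ge\ \frac{2n-B}{B},
\]
and this exceeds $\alpha$ exactly when $(\alpha+1)B<2n$. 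This contradicts distortion at most $\alpha$, so the protocol must use at least $\frac{2n}{\alpha+1}$ bits in the worst case.

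\textbf{Main obstacle.} The delicate step is the simulation. It must correctly handle the facts that whether a voter speaks may itself depend on her preference, and that undetermined voters must not influence later messages of other voters. The invariant ``identical transcript under both preferences'' handles both issues. The rule of fixing a voter toward a nonempty message is what guarantees that fixed voters number at most $B$. I would also check that the number of rounds plays no role, since the argument charges bits rather than rounds.
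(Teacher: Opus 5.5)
Your proposal is correct and is essentially the paper's argument: the same adversarial fixing of each voter toward a nonempty message, the same count of fixed voters against bits written, and the same two-point metric. The only cosmetic difference is that you place fixed voters exactly at the midpoint (using that consistency permits ties), where the paper puts them at $\pm\epsilon$; the one step you assert without justification is $w\in\{a,b\}$, which needs the paper's one-line remark that any other winner can be placed arbitrarily far away, giving unbounded distortion.
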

\begin{proof}
Consider a line metric space.
Fix two candidates $w$ and $o$.
Consider only profiles in which every voter ranks $w$ and $o$ above all other candidates, with all other candidates placed arbitrarily far away from the voters in a fixed order.
Therefore, each voter has only two possible rankings.

We construct a transcript while leaving some voters' rankings undetermined.
In each round, each voter whose ranking remains undetermined has two possible messages to send (which could be empty).
If the two messages coincide, we leave the ranking undetermined.
Otherwise, fix the ranking to one that produces a nonempty message.
Voters whose rankings are fixed follow the protocol.
At termination, let $S$ be the set of voters whose rankings were fixed.
Each voter in $S$ sends at least one bit when their ranking is fixed, so the protocol communicates at least $|S|$ bits.

Every completion of the undetermined rankings produces the same transcript and hence the same winner.
The winner must be in $\{w,o\}$, otherwise the distortion is unbounded.
Without loss of generality, suppose that $w$ is elected.
Complete every undetermined ranking with $o\succ w$.

We can place $w$ at $-1$, $o$ at $1$, each voter in $S$ at $-\eps$ or $\eps$ for arbitrarily small $\eps > 0$, and all remaining voters at $1$.
This is consistent with the completed profile.
Consequently, as $\eps\to0$, the distortion guarantee implies $\cost[V]{w} = 2n - |S| \le \alpha |S| = \alpha \cost[V]{o}$, so $|S|\ge \frac{2n}{\alpha+1}$.
\end{proof}

\section{Voting Rules Without the Perfect-Matching Property}
\label{apx:rules-without-matching}

We show that a weaker distortion bound holds even when the voting rule used at the internal nodes does not satisfy the perfect-matching property.

\begin{lemma}
\label{lem:other-rules}
Suppose that Plurality Veto in Algorithm~\ref{alg:tree-pv} is replaced by a deterministic voting rule $f$ whose distortion is at most $\alpha \ge 3$.
The resulting protocol has distortion $O\!\left(\alpha^{r-1}\right)$.
\end{lemma}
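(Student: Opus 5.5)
We will prove, by induction on height as in Lemma~\ref{lem:distortion}, that for every node $x$ and every candidate $o\in C$,
\[
\cost[U_x]{w_x}\le \beta_{h_x}\cost[U_x]{o},
\]
where $\beta_0=1$ and $\beta_h = \alpha(2\beta_{h-1}+1)$, or a similar linear recursion. Solving gives $\beta_h=O\bigl((2\alpha)^h\bigr)$. To reach the stated $O(\alpha^{r-1})$ we will tighten the constants below, so that each level multiplies the bound by $\alpha$ plus lower-order terms. The base case is identical to that of Lemma~\ref{lem:distortion}.

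For the inductive step at an internal node $x$ with $U:=U_x$ and $A:=A_x$, fix any $o\in C$. The only available guarantee is that $f$ has distortion $\alpha$ on the sub-election $(U,A)$. Since any metric consistent with the full profile restricts to one consistent with the restricted rankings, this gives $\cost[U]{w_x}\le\alpha\,\cost[U]{a}$ for every $a\in A$. The main obstacle is that $o$ need not lie in $A$, so we must compare $o$ to the best candidate of $A$. We take $a^*\in A$ closest to $o$ and use the triangle inequality $d(v,a^*)\le d(v,o)+d(o,a^*)$. This gives
\[
\cost[U]{a^*}\le \cost[U]{o}+|U|\,d(o,a^*).
\]

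To bound $|U|\,d(o,a^*)$, we split $U$ into the children's sets $U_y$, whose winners satisfy $w_y\in A$. For each child $y$ we have $d(o,a^*)\le d(o,w_y)\le d(o,v)+d(v,w_y)$ for every $v\in U_y$. Summing over $v\in U_y$ gives
\[
|U_y|\,d(o,a^*)\le \cost[U_y]{o}+\cost[U_y]{w_y}\le (1+\beta_{h_y})\cost[U_y]{o},
\]
where the last step is the inductive hypothesis. Summing over the children and using $\beta_{h_y}\le\beta_{h_x-1}$ yields $|U|\,d(o,a^*)\le (1+\beta_{h_x-1})\cost[U]{o}$. Combining the three displays,
\[
\cost[U]{w_x}\le \alpha\,(2+\beta_{h_x-1})\cost[U]{o},
\]
so we may set $\beta_h=\alpha(\beta_{h-1}+2)$.

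Unrolling this recursion gives $\beta_h\le \alpha^h+2(\alpha+\alpha^2+\cdots+\alpha^h)$. Since $\alpha\ge3$, the geometric sum is at most $\tfrac{3}{2}\alpha^h$, so $\beta_h=O(\alpha^h)$. As in Lemma~\ref{lem:communication}, the tree has height $H\le r-1$, so applying the bound at the root gives distortion $\beta_H=O(\alpha^{r-1})$. The step requiring care is the restriction argument, namely that $f$'s distortion guarantee applies to the sub-election $(U,A)$ under the true metric. This holds because $f$'s guarantee quantifies over all metrics consistent with the sub-election's profile, and the true metric restricted to $U\cup A$ is one of them.
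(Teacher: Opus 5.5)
Your proof is correct and follows the paper's argument: both proofs arrive at the recurrence $\beta_h=\alpha(\beta_{h-1}+2)$ and the bound $\beta_h\le 4\alpha^h$. The one cosmetic difference is that you compare against the candidate $a^*\in A$ closest to $o$ and average over all children, whereas the paper picks a single child $y$ with $\cost[U_y]{o}/|U_y|\le\cost[U]{o}/|U|$ and compares against $w_y$; the tentative recursion $\alpha(2\beta_{h-1}+1)$ in your first paragraph is superseded by this and can simply be dropped.
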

\begin{proof}
Let $D_0 := 1$ and $D_h := \alpha(D_{h-1}+2)$ for every $h \ge 1$.
Following the same leaf-to-root induction as in the proof of Lemma~\ref{lem:distortion}, we show that
\[
\cost[U_x]{w_x}\le D_{h_x}\cost[U_x]{o}
\]
for every node $x$ and every candidate $o\in C$.
The base case holds because $D_0=1$ and the winner at a leaf node is the global favorite candidate of every voter in $U_x$.

Consider an internal node $x$.
Let $h := h_x$ and $U := U_x$.
Fix any $o\in C$.
By averaging, some child node $y$ of $x$ satisfies $\frac{\cost[U_y]{o}}{|U_y|} \le \frac{\cost[U]{o}}{|U|}$.
By the triangle inequality,
\begin{align*}
\cost[U]{w_y}
&\le \cost[U]{o} + |U| \, d(o,w_y)\\
&\le \cost[U]{o} + \frac{|U|}{|U_y|} \!\left(\cost[U_y]{o} + \cost[U_y]{w_y}\right)\\
&\le \cost[U]{o} + (D_{h-1} + 1) \frac{|U|}{|U_y|}\cost[U_y]{o}\\
&\le (D_{h-1} + 2) \cost[U]{o}.
\end{align*}
The third inequality uses the inductive hypothesis,
$h_y\le h-1$, and that $D_h$ is increasing in $h$.
The last inequality uses the choice of $y$.

Because $w_y\in A_x$ and $f$ has distortion at most $\alpha$,
\[
\cost[U]{w_x} \le \alpha\min_{a\in A_x}\cost[U]{a} \le \alpha\cost[U]{w_y} \le \alpha(D_{h-1}+2)\cost[U]{o} = D_h\cost[U]{o}.
\]

Solving the recurrence gives
\[
D_h
=
\alpha^h+\frac{2\alpha(\alpha^h-1)}{\alpha-1}
\le
4\alpha^h,
\]
where the inequality uses $\alpha\ge3$.
Since $H\le r-1$, the final distortion is at most $D_H=O\!\left(\alpha^{r-1}\right)$.
\end{proof}

\end{document}